\documentclass[3p, preprint, times, fleqn]{elsarticle}	


\usepackage{mathtools}
\usepackage{amssymb}
\usepackage{upgreek}
\usepackage{derivative}
\usepackage{braket}

\usepackage{xcolor}
	\definecolor{Verde}{cmyk}{1,0.21,1,0.2}
	\definecolor{Blu}{cmyk}{1,0.6,0,0.2}
	
\usepackage{tikz}
	\usetikzlibrary{patterns}
	\tikzset{>=latex}
	
\usepackage{pgfplots}
	\usepgfplotslibrary{colormaps, fillbetween}
	\pgfplotsset{/pgf/number  format/1000 sep={\,}, compat=newest}
	\pgfkeys{/tikz/singletrim/.style={trim axis left, trim axis right}}
	\pgfplotsset{
		basicoptions/.style={
			tick label style={font=\footnotesize},
			label style={font=\small},
			legend style={font=\small},
			legend style={draw=none}, legend cell align=left,},
		singleplot/.style={
			basicoptions, width={8.5cm}, height={5.5cm}},
	}

	
\usepackage{hyperref}
\hypersetup{colorlinks}

	
\newcommand*{\vb}[1]{\boldsymbol{#1}} 

\NewDocumentCommand{\grad}{e{_^}}{%
  \mathop{}\!
  \vb{\nabla}
  \IfValueT{#1}{_{\!#1}}
  \IfValueT{#2}{^{#2}}
}
\newcommand*{\diver}[1]{\grad\cdot {#1}}


	\DeclarePairedDelimiterX\comm[2]{[}{]}{#1,\mathopen{} #2}

	\newcommand*{\R}{\mathbb{R}}

	\newcommand*{\Z}{\mathbb{Z}}	

	\newcommand*{\iu}{\mathrm{i}\mkern1mu} 
	\newcommand*{\e}{\mathrm{e}}
	\newcommand*{\pg}{\uppi} 

	\newcommand{\ie}{i.e.\ }
	\newcommand{\eg}{e.g.\ }

	\newcommand{\cc}{^{\ast}}

	\newcommand*{\A}{\vb{A}}
	\newcommand*{\Ap}{\skew{4}\tilde{\vb{A}}}
	\newcommand*{\App}{\tilde{A}}
	
	\newcommand*{\gradA}{\grad_{\A}}
	\newcommand*{\gradAp}{\grad_{\Ap}}
	\newcommand*{\n}{\vb{n}}
	\newcommand*{\tv}{\vb{t}}
	\newcommand*{\x}{\vb{x}}
	
	\newcommand*{\HA}{ H_{\A} }
	\newcommand*{\HAp}{ H_{\Ap} }
	
	\newcommand*{\B}{ \mathcal{B} }
	\newcommand*{\BA}{ \mathcal{B}_{\A} }
	
	\newcommand*{\BAp}{ \mathcal{B}_{\Ap} }
	\newcommand*{\nuA}{ \nu_{\!\A} }
	\newcommand*{\nuAp}{ \nu_{\!\Ap} }
	
	\newcommand*{\closure}[1]{\overline{#1}}

	\newcommand*{\dO}{\partial\Omega}
	\newcommand*{\Omegabar}{\closure{\Omega}}

	\newcommand*{\testO}{ \mathcal{D}(\Omegabar) }
	\newcommand*{\testdO}{ \mathcal{D}(\dO) }


\newtheorem{Proposition}{Proposition}
\newdefinition{Definition}{Definition}
\newproof{proof}{Proof}

\bibliographystyle{elsarticle-num}

\makeatletter
\def\ps@pprintTitle{%
 \let\@oddhead\@empty
 \let\@evenhead\@empty
 \def\@oddfoot{}%
 \let\@evenfoot\@oddfoot}
\makeatother

\begin{document}
\begin{frontmatter}
\title{Quantum magnetic billiards: boundary conditions and gauge transformations
}

\author[1,2]{Giuliano Angelone\corref{cor1}
	}\ead{giuliano.angelone@ba.infn.it}
\author[1,2]{Paolo Facchi
	}
\author[1,2]{Davide Lonigro
	}

\cortext[cor1]{Corresponding author.}
\address[1]{Dipartimento di Fisica, Universit\`a di Bari, I-70126 Bari, Italy}
\address[2]{INFN, Sezione di Bari, I-70126 Bari, Italy}

\begin{abstract}
We study admissible boundary conditions for a charged quantum particle in a two-dimensional region subjected to an external magnetic field, \ie a quantum magnetic billiard. After reviewing some physically interesting classes of admissible boundary conditions (magnetic Robin and chiral boundary conditions), we turn our attention to the role of gauge transformations in a magnetic billiard: in particular, we introduce gauge covariant boundary conditions, and find a sufficient condition for gauge covariance which is satisfied by all the aforementioned examples.
\end{abstract}

\begin{keyword}
Quantum billiards \sep 
Self-adjoint extensions \sep
Quantum boundary conditions \sep
Gauge transformations
\end{keyword}
\end{frontmatter}


\section{Introduction}

The study of quantum systems confined in a bounded domain requires a careful description of the physical properties of its boundary, and thus of the interaction between the system and the boundary, effectively encoded via a proper choice of boundary conditions. In recent years, quantum boundary conditions have increasingly attracted interest in different branches of quantum physics~\cite{AIM05,AIM15}, some examples being the analysis of quantum Hall systems~\cite{NT87, AANS98}, the study of geometric phases~\cite{FaGaMa16}, quantum control theory, topological insulators and QCD~\cite{ABPP12}, quantum gravity and topology change~\cite{ShWiXi12,PPBI15}, as well as the Casimir effect in quantum field theory~\cite{AsGaMu06,AMC13,ABPP16}, to name a few.

Quantum magnetic billiards are a paradigmatic example of such systems. Magnetic billiards consist of a charged particle which moves in a region of the plane and interacts solely with an external magnetic field and with the boundary of the region. Despite the apparently innocuous setup, even at the classical level, where the particle interaction with the billiard consists just in a reflection (an elastic scattering), magnetic billiards have a rich physical phenomenology as well as many interesting mathematical properties, which arise from the interaction between the arc-like trajectories of the particle and the reflection at the boundary~\cite{RobBer85,BerKun96}. Their quantum-mechanical counterpart is even more surprising, and throughout the years has been widely studied from different points of view and with various applications. 

From the physical perspective, quantum magnetic billiards represent a starting point to study magneto-transport properties and, in particular, the quantum Hall effect~\cite{MorSch17, Yoshi02}. Moreover, they constitute a non-trivial arena to understand the quantum-to-classical transition and the emergence of quantum chaos~\cite{Ser92, Naka02} as well as to investigate the subtleties related to the Aharonov-Bohm effect and, more generally, the interplay between the electromagnetic interaction and the geometry of the system~\cite{AhaBoh59, PeshTono89}. 
Quantum magnetic billiards are also extensively studied in the mathematical literature, which ranges e.g.\ from the rigorous analysis of magnetic Hamiltonians and of their spectral properties~\cite{Lein83, OliMon21}, to the description of gauge fields in  non-trivial topologies and geometries~\cite{PanRich11, WuYa75}.

Therefore, it is crucial to characterize the physically admissible boundary conditions for a quantum magnetic billiard. While, in general, a ``good" boundary condition must preserve the self-adjointness of the Hamiltonian (and thus, physically, ensure the overall conservation of probability), when dealing with magnetic systems an \emph{additional request} emerges: the \emph{gauge covariance} of the system must also be ensured, thus constraining the quantum boundary conditions of the magnetic billiard to be gauge covariant as well.

In this work, we apply the well-developed theory of self-adjoint extensions of Hermitian operators to a simple setup involving a non-relativistic spinless particle in a magnetic billiard. Following a bottom-to-top approach which focuses on the physical intuition behind the formalism, we will characterize a large family of physically admissible boundary conditions for a regular quantum magnetic billiard, also showing that such a family does include some remarkable examples of quantum boundary conditions, namely magnetic Robin and chiral ones. To do so, we shall develop a versatile framework capable of describing the relationship between the boundary conditions and the self-adjoint extension of the Hamiltonian, without dwelling too much on the mathematical subtelties related to the analysis of unbounded operators. Our approach also paves the way to some interesting applications, which will be explored in future works, such as a self-consistent model of the quantum Hall effect seen as a magnetic billiard~\cite{quanHall}.

The paper is organized as follows. In \autoref{sec-qmb} we describe our setup introducing the relevant Hamiltonian (i.e.\ the magnetic Laplacian). In \autoref{sec-bc} we describe some interesting families of boundary conditions which render the Hamiltonian self-adjoint. Finally, in \autoref{sec-gauge} we investigate the general relationship between boundary conditions and gauge transformations.

\section{Quantum magnetic billiard}\label{sec-qmb}
For our purposes, a magnetic billiard consists in a nonrelativistic spinless particle with mass $m$ and  electric charge $-e<0$, constrained in a region $\Omega\subseteq\R^2$ of the $xy$-plane and subjected to a magnetic field aligned with the $z$ axis and of modulus $B(x,y)$. The magnetic field is related to the \emph{vector potential} $\A=(A_x, A_y)$ by the relation
\begin{equation}\label{eq-Bcurl}
B=\frac{\partial A_y}{\partial x}-\frac{\partial A_x}{\partial y}\,.
\end{equation}
In order to avoid technical difficulties, in the following we will focus on the case of a \emph{regular magnetic billiard}, \ie we will assume that:
\begin{enumerate}
\item $\Omega$ is an open, connected and bounded subset of $\R^2$ with a smooth boundary $\dO$;
\item $\A(x,y)$ is a smooth vector field on the closure of $\Omega$;
\end{enumerate}
However, we will not assume $\Omega$ to be simply connected, allowing i.e. the possibility of magnetic billiards with one or more holes, an interesting setup which is also relevant for the description of the Aharonov-Bohm effect in a bounded domain.

\subsection{The magnetic Laplacian}
In a quantum mechanical setting, the system must be described by a properly chosen self-adjoint operator (the Hamiltonian) acting on the Hilbert space $L^2(\Omega)$ of complex square-integrable functions on the set $\Omega\subseteq\R^2$,  
endowed with the  scalar product 
\begin{equation}
	\braket{\psi|\phi}= \int_{\Omega} \psi(\x)^*\phi(\x)\,\mathrm{d}\x\,,
\end{equation}
where $\x=(x,y)$, and its associated norm $\|\psi\|^2= \braket{\psi|\psi}$.
Formally, the quantum Hamiltonian of a magnetic billiard is given by the differential operator
\begin{equation}\label{eq-HA}
\HA\equiv \frac{1}{2m}(-\iu\hbar\grad+e\A)^2=-\frac{\hbar^2}{2m}\gradA^2\,,
\end{equation}
where $\gradA^2$ is known as the magnetic Laplacian, $\hbar$
 is the reduced Planck constant, and 
\begin{equation}\label{eq-gradA}
\gradA\equiv \grad+\iu\frac{e}{\hbar}\A = \left(\frac{\partial}{\partial x} + \iu\frac{e}{\hbar}A_x ,\; \frac{\partial}{\partial y} + \iu\frac{e}{\hbar}A_y \right)
\end{equation}
is the so-called covariant derivative.
More explicitly, the operator~\eqref{eq-HA} can be written  as
\begin{equation}\label{eq-HAxy}
\HA=-\frac{\hbar^2}{2m}\left(\frac{\partial^2}{\partial x^2} +\frac{\partial^2}{\partial y^2}\right)
-\iu \frac{\hbar e}{ m}\left( A_x(\x) \frac{\partial}{\partial x}+ A_y(\x) \frac{\partial}{\partial x}\right)
-\iu \frac{\hbar e}{ 2 m}\left(\frac{\partial A_x(\x)}{\partial x} +\frac{\partial A_y(\x)}{\partial y} \right)
+\frac{e^2}{2m}\A^2(\x) \,.
\end{equation}

\paragraph{Gauge freedom} It is well-known that Eq.~\eqref{eq-Bcurl} does not completely fix the vector potential in terms of the magnetic field, leaving a so-called gauge freedom: since distinct vector potentials corresponding to the same magnetic field are associated in general with distinct Hamiltonians, this introduces an ambiguity in the description of the system. However, as we will discuss in \autoref{sec-gauge}, when there are no topological obstructions distinct Hamiltonians associated with the same magnetic field turn out to be unitarily equivalent. This is always the case when the particle is free to move in the whole plane or, more generally, when $\Omega$ is a simply connected region.

\subsection{Domain specification}\label{sec-sa}
Equation~\eqref{eq-HA} is not enough to define a Hamiltonian on $L^2(\Omega)$, since the formal expression $\HA\psi$ is not  a square-integrable function for every $\psi\in L^2(\Omega)$. As such, a \emph{domain specification} is also needed: we must choose a dense subspace $\mathfrak{D}\subset L^2(\Omega)$ of wavefunctions such that, for every $\psi\in\mathfrak{D}$, $\HA\psi$ is square-integrable. Necessarily, for each such choice, $\HA$ will be rendered as an \emph{unbounded} operator~\cite{Teschl}.
 We stress that, far from being a mathematical subtlety, the domain specification is, in general, crucial: many properties of a given unbounded operator, such as its spectrum, strongly depend on the domain $\mathfrak{D}$ where it acts: different domains correspond to different physical realizations of $\HA$. In this sense, the domain specification is physical.

Furthermore, in order to describe a \emph{physical observable} (and thus to generate a unitary evolution), the Hamiltonian must be a \emph{self-adjoint} operator. Indeed, we recall that self-adjointness is a necessary and sufficient request for a Hermitian operator to have a real spectrum. For this purpose, it is convenient to introduce the following
boundary form associated with $\HA$:
\begin{equation}
	\Lambda_{\A}(\psi,\phi)\equiv\braket{\psi|\HA\phi}-\braket{\HA\psi|\phi}
=\int_{\Omega} \left[\psi\cc \HA\phi-(\HA\psi)\cc\phi\right] \,\mathrm{d}\x\,.\label{eq-LambdaAdef}
\end{equation}
With this notation, the operator $\HA$ with domain $\mathfrak{D}$ will be
\begin{itemize}
	\item \emph{Hermitian}  if $\Lambda_{\A}(\psi,\phi)=0$ for all $\psi,\phi\in\mathfrak{D}$;
	\item \emph{self-adjoint} if, for any fixed $\psi\in\mathfrak{D}$, $\Lambda_{\A}(\psi,\phi)=0$ if and only if $\phi\in\mathfrak{D}$,
\end{itemize}
the second condition obviously implying the first one. For practical purposes, however, it is enough requiring $\HA$ to be \emph{essentially self-adjoint}, i.e. to have a unique self-adjoint extension:
as a matter of fact, this allows one to choose a space of suitably well-behaved functions as domain, without losing any physical information. 

For the kind of systems we are considering in this paper, let us distinguish two paradigmatic examples.
\begin{itemize}
\item When the particle is free to move in the whole $xy$-plane, \ie when $\Omega=\R^2$, a suitable choice for the domain $\mathfrak{D}$ which guarantees the essential self-adjointness of $\HA$ is the space of test functions, \ie smooth and compactly supported functions, denoted by $\mathcal{D}(\R^2)$. This is a well-known result which holds even for non-smooth (singular) vector potentials: see e.g.\ Theorem~2 of~\cite{Le81}.
\item When the particle is constrained in a regular billiard $\Omega\subset\R^2$, instead, though being Hermitian on the domain $\mathfrak{D}=\mathcal{D}(\Omega)$ (the space of test functions supported in $\Omega$) the magnetic Laplacian $\HA$  is not essentially self-adjoint, but rather admits infinitely many self-adjoint extensions~\cite{AIM15}.
\end{itemize}
As it turns out, the existence of multiple self-adjoint extensions in the latter case is strictly linked to the behavior of wavefunctions $\psi\in L^2(\Omega)$ on the boundary $\partial\Omega$. Functions in $\mathcal{D}(\Omega)$, being compactly supported, do not ``see'' the boundary of the billiard. In other words, no information about the physics at the boundary is encoded in them. If we want $\HA$ to describe a unique physical implementation of the billiard, we must specify a \emph{boundary condition (BC)}.

Following a physical insight, in the next section we elucidate this point by clarifying which BCs are actually physically relevant.

\section{Boundary conditions for a regular magnetic billiard}\label{sec-bc}
Consider the larger space $\mathcal{D}(\Omegabar)$ of functions which are smooth up to the boundary, $\Omegabar=\Omega\cup \partial\Omega$ being the closure of $\Omega$. The magnetic Laplacian $\HA$ with domain $\mathcal{D}(\Omegabar)$ is, in general, a non-Hermitian operator whose boundary form, introduced in the previous section, can be fully expressed in terms of boundary quantities through the well-known Green's formula.

\begin{Proposition}\label{prop-lackofsymmetry}
	For each $\psi,\phi \in \mathcal{D}(\Omegabar)$, the boundary form~\eqref{eq-LambdaAdef} can be expressed as
	\begin{equation}\label{eq-LambdaA}
	\Lambda_{\A}(\psi,\phi)=\int_{\dO} (\n\cdot\vb{j}_{\psi,\phi}^{\A})(s) \,\mathrm{d}s\,,
	\end{equation}
	where $s$ denotes the induced curvilinear coordinate on $\dO$,  $\n=\n(s)$ is the unit normal vector and
	\begin{equation}\label{eq-probabilitycurrent}
	\vb{j}_{\psi,\phi}^{\A}\equiv-\frac{\hbar^2}{2m}[\psi\cc\gradA\phi -(\gradA\psi)\cc\phi]\,.
	\end{equation}
\end{Proposition}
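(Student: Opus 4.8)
The plan is to exhibit the integrand of the boundary form as an exact divergence and then invoke the divergence theorem. Since $\HA=-\frac{\hbar^2}{2m}\gradA^2$ by Eq.~\eqref{eq-HA}, the claim is equivalent to the pointwise \emph{magnetic Green identity}
\[
\psi\cc\gradA^2\phi-(\gradA^2\psi)\cc\phi=\diver{\left[\psi\cc\gradA\phi-(\gradA\psi)\cc\phi\right]}.
\]
Once this is established, multiplying by $-\hbar^2/2m$ identifies the vector field inside the divergence with the current $\vb{j}_{\psi,\phi}^{\A}$ of Eq.~\eqref{eq-probabilitycurrent}, so that $\psi\cc\HA\phi-(\HA\psi)\cc\phi=\diver{\vb{j}_{\psi,\phi}^{\A}}$, and integrating over $\Omega$ followed by Gauss's theorem yields Eq.~\eqref{eq-LambdaA}.

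To prove the magnetic Green identity I would first establish a magnetic analogue of Green's first identity. Writing the covariant derivative explicitly and using that complex conjugation turns $\gradA$ into $\grad-\iu\frac{e}{\hbar}\A$, i.e.\ $(\gradA\psi)\cc=\grad\psi\cc-\iu\frac{e}{\hbar}\A\,\psi\cc$, the ordinary Leibniz rule for the divergence gives
\[
\diver{(\psi\cc\gradA\phi)}=\psi\cc\gradA^2\phi+(\gradA\psi)\cc\cdot\gradA\phi.
\]
The content of this step is purely algebraic: the terms proportional to $\diver{\A}$ and to $\A^2$, together with the two $\A\cdot\grad$ cross terms, must be tracked and recombined into the Hermitian ``kinetic'' density $(\gradA\psi)\cc\cdot\gradA\phi$, which is invariant under exchanging $\psi\leftrightarrow\phi$ and conjugating. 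Replacing $\psi$ by $\phi$ and conjugating the whole identity (using that the divergence commutes with complex conjugation) gives the companion relation
\[
\diver{\left[(\gradA\psi)\cc\phi\right]}=(\gradA^2\psi)\cc\phi+(\gradA\psi)\cc\cdot\gradA\phi.
\]
Subtracting the two relations makes the common kinetic density cancel and reproduces exactly the magnetic Green identity.

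Finally, since $\Omega$ is open, bounded and has a smooth boundary $\dO$, and since $\psi,\phi\in\mathcal{D}(\Omegabar)$ and $\A$ are smooth up to $\dO$, the field $\vb{j}_{\psi,\phi}^{\A}$ is continuously differentiable on $\Omegabar$ and the divergence theorem applies without regularity concerns, converting $\int_{\Omega}\diver{\vb{j}_{\psi,\phi}^{\A}}\,\mathrm{d}\x$ into the flux $\int_{\dO}\n\cdot\vb{j}_{\psi,\phi}^{\A}\,\mathrm{d}s$. The main (and essentially only) obstacle is the bookkeeping in the first identity: one must verify that all the $\A$-dependent contributions cancel in the antisymmetric combination, rather than surviving as a spurious gauge-dependent term. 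Organizing the calculation through the two symmetric Green's-first identities, as above, is precisely what makes this cancellation transparent.
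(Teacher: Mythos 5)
Your proof is correct and follows essentially the same route as the paper's: both reduce the claim to the pointwise identity $\psi\cc\HA\phi-(\HA\psi)\cc\phi=\diver{\vb{j}_{\psi,\phi}^{\A}}$ and then apply the Gauss--Green theorem, which is unproblematic given the smoothness of $\psi$, $\phi$, $\A$ up to $\dO$. The only difference is presentational: you organize the algebra through two magnetic first Green identities whose kinetic terms $(\gradA\psi)\cc\cdot\gradA\phi$ cancel upon subtraction, whereas the paper verifies the divergence identity by direct expansion after noting that the $\A^2$ terms drop out of the antisymmetric combination.
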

\begin{proof}
	By inserting the explicit expression~\eqref{eq-HA} of $\HA$ into~\eqref{eq-LambdaAdef} and observing that the terms proportional to $\A^2$ cancel out, we obtain
	\begin{align}
	\Lambda_{\A}(\psi,\phi)&=\frac{1}{2m}\int_{\Omega} \left\{\psi\cc(-\iu\hbar\grad+e\A)^2\phi-[(-\iu\hbar\grad+e\A)^2\psi]\cc\phi\right\}\,\mathrm{d}\x
	\notag\\ 
	&=\frac{-\hbar^2}{2m}\int_{\Omega}\left\{\psi\cc (\diver{\gradA}+\iu \tfrac{e}{\hbar}\A\cdot\grad)\phi
	-[(\diver{\gradA} + \iu\frac{e}{\hbar}\A\cdot\grad)\psi]\cc\phi\right\}\,\mathrm{d}\x\notag \\	
	&=\frac{-\hbar^2}{2m}\int_{\Omega} 
	\diver{[\psi\cc \gradA\phi - (\gradA\psi)\cc\phi]}\,\mathrm{d}\x=\int_{\Omega} 
	\diver{\vb{j}_{\psi,\phi}^{\A}(\x)}\,\mathrm{d}\x\,. \label{eq-LambdaAbulk}
	\end{align}
	By applying the Gauss-Green theorem, the bulk integral~\eqref{eq-LambdaAbulk} is transformed into the line integral~\eqref{eq-LambdaA}, thus proving the claim.
\qed\end{proof}

The boundary form~\eqref{eq-LambdaA}, which depends only on the \emph{edge} behaviour of the wavefunctions and of the vector potential, has a clear physical interpretation: the quadratic form $\Lambda_{\A}(\psi,\psi)$ represents indeed the net flux through the boundary of the probability current density  $\cramped{\vb{j}_{\psi,\psi}^{\A}}$
associated with the state $\psi$. As a result, requiring $\HA$ to be Hermitian, \ie $\Lambda_{\A}(\psi,\psi)=0$ for all $\psi$ in its domain, corresponds to a probability conservation law~\cite{AMC12,AGAMC15}
 . 

Moreover, as a consequence of Proposition~\ref{prop-lackofsymmetry}, the domain $\mathfrak{D}$ for which the magnetic Laplacian $\HA$ is Hermitian can be determined by just looking at the behavior of the wavefunctions $\psi\in\mathfrak{D}$ on the boundary $\dO$. To  formalize this concept, let us give the following definitions.
\begin{figure}[tb]
	\centering
	\includegraphics{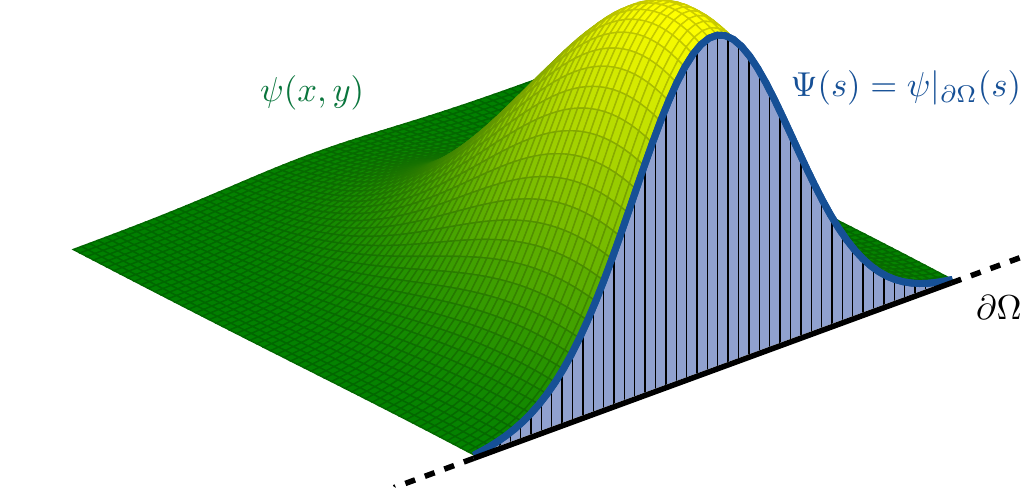}
	\caption{Example of a bulk function $\psi(x,y)\in\testO$ (in green) and of its restriction $\Psi(s)\in\testdO$ (in blue).}
	\label{fig-restriction}
\end{figure}

\begin{Definition}
	The \emph{restriction operator} $\gamma$ and the \emph{covariant normal derivative} $\nuA$ are defined respectively by
	\begin{align}
	\gamma\colon\testO\to\testdO\,,& \qquad\gamma\colon \psi \mapsto \Psi\equiv \psi|_{\dO}\,;\label{eq:gamma}\\
	\nuA\colon\testO\to\testdO\,,& \qquad\nuA\colon  \psi\mapsto \dot\Psi\equiv\n\cdot(\gradA{\psi})|_{\dO}\,.\label{eq:nuA}
	\end{align} 
\end{Definition}

To highlight the difference between $\testO$ and $\testdO$, sketched in \autoref{fig-restriction}, we will reserve the upper-case Greek letters to denote wavefunctions defined exclusively on the boundary. With the above definitions, the boundary form $\Lambda_{\A}$ can be equivalently written as
\begin{equation}
\Lambda_{\A}(\psi,\phi)=-\frac{\hbar^2}{2m}\int_{\dO}\left[\gamma\psi^*\,\nuA\phi-(\nuA\psi)^*\,\gamma\phi\right](s)\,\mathrm{d}s
=-\frac{\hbar^2}{2m}\int_{\dO}\left[\Psi^*\dot\Phi-\dot{\Psi}^*\Phi\right](s)\,\mathrm{d}s\,.
\end{equation}
This expression strongly suggests to implement a BC  as a linear relation  between the values of $\Psi=\gamma\psi$ and $\dot\Psi=\nuA\psi$, expressed through a suitable bulk-to-boundary operator.\footnote{Here we shall only consider BCs defined via \emph{bounded} bulk-to-boundary operators. Such BCs will be enough for our purposes. In general, as briefly discussed in~\ref{app:generalboundary}, unbounded operators may be also taken into account.}
\begin{Definition}
	A \emph{bulk-to-boundary operator} is an operator $\BA\colon\testO\to\testdO$ which can be expressed in the form
	\begin{equation}\label{eq-BBop}
	\BA=T_{1,\A} \gamma - T_{2,\A} \nuA\,,
	\end{equation}
	where $T_{1,\A}$, $T_{2,\A}$ are two boundary operators, that is
	\begin{equation}\label{eq-Top}
	T_{i,\A}\colon \testdO\to \testdO\qquad  (i=1,2)\,.
	\end{equation}
\end{Definition}

The bulk-to-boundary operator $\BA$ acts thus between two different spaces, mapping a function on the bulk $\psi\in \testO$ to a  function on the boundary $\BA\psi\in \testdO$. 
\begin{Definition}\label{def-bc}
	Given a vector potential $\A$, a \emph{magnetic boundary condition} is a constraint $\BA\psi=0$, \ie a condition
	\begin{equation}\label{eq-BAcondition}
	\psi \in \ker \BA=\left\{\, \psi\in\testO : T_{1,\A}\Psi = T_{2,\A} \dot\Psi \,\right\}\,,
	\end{equation}
	where $\BA\colon \testO \to \testdO $ is a bulk-to-boundary operator.
\end{Definition}

It should be noted that a magnetic BC is associated to a bulk-to-boundary operator $\BA$ only up to an invertible operator $C$, since $\ker\BA=\ker(C\BA)$. In what follows, with a slight abuse of notation we will eventually denote a bulk-to-boundary operator itself as the corresponding BC. Notice also that the kernel $\ker\BA$ is never a trivial space, as it always contains the whole space $\mathcal{D}(\Omega)$ of test functions on $\Omega$. In our formalism, for every BC $\BA$ we have
$	\mathcal{D}(\Omega)\subset\ker\BA\subset\testO$,
the magnetic Laplacian being Hermitian but not essentially self-adjoint on $\mathcal{D}(\Omega)$, and non-Hermitian on $\testO$. The Hamiltonian obtained by imposing a magnetic BC on $\testO$ “lays in the middle”.

\paragraph{Physically admissible boundary conditions}
An admissible magnetic BC, \ie one associated with a physical observable, must necessarily render $\HA$ essentially self-adjoint as an operator with domain $\ker\BA$. Such BCs can be parametrized by unitary operators acting on a proper boundary space~\cite{AIM05,ILP15,FGL18}. This point is briefly discussed in~\ref{app:generalboundary}.
Notice, however, that this is \emph{not} enough: the dependence of the boundary operators $T_{i,\A}$ on the vector potential $\A$ is not completely arbitrary, since a good magnetic BC should also be \emph{gauge covariant}. This crucial point will be discussed in~\autoref{sec-gaugebc}.

We conclude this section by introducing some interesting examples of magnetic BCs. To keep our discussion simple, we will only verify the Hermiticity of $\HA$ subjected to our selected BCs, relying to the more general results presented in~\ref{app:generalboundary} to prove essential self-adjointness. 

\subsection{Magnetic Robin boundary conditions}
\begin{Definition}
Let $\alpha\colon \dO\to\R$ be a smooth, real-valued function on the boundary $\partial\Omega$ of the billiard $\Omega$. The \emph{magnetic Robin boundary conditions} $\BA^{\textup{R}}(\alpha)$ are defined via
\begin{equation}
T_{1,\A}=\alpha I \qquad\text{and}\qquad T_{2,\A}=I\,,
\end{equation}
that is
\begin{equation}\label{eq-Robin}
\begin{aligned}
\psi \in \ker\BA^{\textup{R}}(\alpha)\iff\n\cdot(\gradA{\psi})|_{\dO}(s)=\alpha(s)\,\psi|_{\dO}(s)\,.
\end{aligned}
\end{equation}
\end{Definition}
Physically, magnetic Robin BCs encode a local proportionality between the boundary values of the functions and their magnetic covariant normal derivatives. In particular, at a boundary point $s$ when $\n\cdot\A=0$ so that $\n\cdot\gradA=\n\cdot\grad$, Robin BCs are repulsive or attractive in the cases $\alpha(s)<0$ and $\alpha(s)>0$, respectively: see the example in~\autoref{fig:Robin}.

\begin{figure}[tb]
	\centering
	\includegraphics{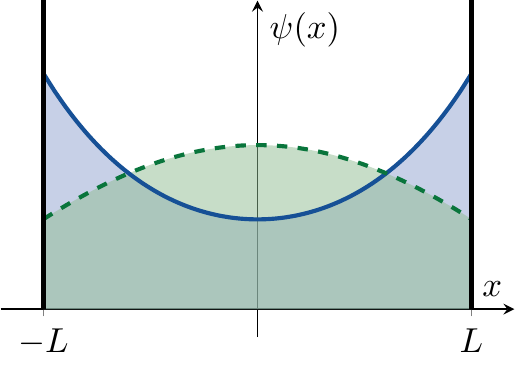}
	\caption{Attractive versus repulsive Robin boundary conditions: Ground energy eigenfunctions of the free Laplacian on $\Omega=\left(-L,L\right)$, subjected to Robin BCs $\n\cdot \grad\psi|_{\dO}=\alpha \psi|_{\dO}$ with $\alpha L=2$ (solid blue line) and $\alpha L=-2$ (dashed green line).}
	\label{fig:Robin}
\end{figure}

\begin{Proposition}\label{prop-Robin}
	The magnetic Hamiltonian $\HA$ with Robin BCs $\BA^{\textup{R}}(\alpha)$ is  Hermitian for every choice of gauge. 
\end{Proposition}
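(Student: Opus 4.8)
The plan is to exploit the boundary-form representation already established for $\HA$ on $\testO$ and simply feed the Robin constraint into it. By Proposition~\ref{prop-lackofsymmetry}, rewritten through the restriction operator $\gamma$ and the covariant normal derivative $\nuA$, we have for all $\psi,\phi\in\testO$
\[
\Lambda_{\A}(\psi,\phi)=-\frac{\hbar^2}{2m}\int_{\dO}\left[\Psi^*\dot\Phi-\dot\Psi^*\Phi\right](s)\,\mathrm{d}s\,,
\]
with $\Psi=\gamma\psi$, $\dot\Psi=\nuA\psi$, and analogously for $\phi$. To prove Hermiticity it suffices to show that this integral vanishes whenever $\psi,\phi\in\ker\BA^{\textup{R}}(\alpha)$.

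First I would impose the Robin condition on both arguments. By definition~\eqref{eq-Robin}, membership in $\ker\BA^{\textup{R}}(\alpha)$ means precisely $\dot\Psi=\alpha\Psi$ and $\dot\Phi=\alpha\Phi$ pointwise on $\dO$. Substituting these into the integrand yields
\[
\Psi^*\dot\Phi-\dot\Psi^*\Phi=\alpha\,\Psi^*\Phi-\alpha^*\,\Psi^*\Phi=(\alpha-\alpha^*)\,\Psi^*\Phi\,.
\]
The decisive step, and really the only one carrying content, is the reality of the Robin function: since $\alpha\colon\dO\to\R$ by hypothesis, $\alpha^*=\alpha$, so the integrand vanishes identically and hence $\Lambda_{\A}(\psi,\phi)=0$ for all $\psi,\phi\in\ker\BA^{\textup{R}}(\alpha)$, establishing Hermiticity.

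For the clause \emph{``for every choice of gauge''} I would observe that the argument never refers to the particular form of $\A$: both the boundary form and the Robin constraint are phrased through the \emph{same} covariant normal derivative $\nuA$, so the cancellation above is insensitive to which vector potential representing the given magnetic field is used. The only place where an obstruction could hide—a surviving imaginary part in $\alpha-\alpha^*$—is removed once and for all by the reality assumption, which is itself gauge-independent. Consequently there is no genuine obstacle here; the proof reduces to a one-line substitution, and the only point warranting care is to state explicitly that it is $\nuA$ (not the bare normal derivative) that enters the constraint, so that the claim holds verbatim for any admissible $\A$.
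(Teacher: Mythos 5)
Your proof is correct and follows essentially the same route as the paper's: substitute the Robin constraint $\dot\Psi=\alpha\Psi$, $\dot\Phi=\alpha\Phi$ into the boundary form from Proposition~\ref{prop-lackofsymmetry} and use the reality of $\alpha$ to see the integrand cancel. Your remark that the gauge independence comes from phrasing both the boundary form and the constraint through the same covariant derivative $\nuA$ is a nice explicit articulation of what the paper leaves implicit (and later formalizes in Proposition~\ref{prop-Lambdacovariance}).
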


Interestingly, this property holds independently of the geometry of $\Omega$ and hence independently of the properties of its boundary $\dO$.
\begin{proof}
	Let us suppose that $\psi, \phi\in\ker\BA^{\textup{R}}(\alpha)$, so that both satisfy~\eqref{eq-Robin}. Then
	\begin{equation}
	\Lambda_{\A}(\psi,\phi) =-\frac{\hbar^2}{2m}\int_{\dO}\left[\psi\cc\n\cdot\gradA\phi -\n\cdot(\gradA\psi)\cc\phi\right]\,\mathrm{d}s
	=-\frac{\hbar^2}{2m}\int_{\dO}\left[\alpha\psi\cc\phi -\alpha\psi\cc\phi\right]\,\mathrm{d}s=0\,,
	\end{equation}
	whence the claim. \qed
\end{proof}

Robin BCs admit two familiar limit cases given by the limits $\alpha(s)\to\infty$ and $\alpha(s)=0$, corresponding respectively to  \emph{Dirichlet boundary conditions} $\B^{\textup{D}}$, obtained by setting $T_{1,\A}=I$ and $T_{2,\A}=0$ so that
\begin{equation}\label{eq-Dirichlet}
\psi \in \ker\B^{\textup{D}} \iff \Psi=0 \,,
\end{equation}
and to  \emph{magnetic Neumann conditions} $\BA^{\textup{N}}$, obtained by setting $T_{1,\A}=0$ and $T_{2,\A}=I$:
\begin{equation}\label{eq-Neumann}
\psi\in\ker\BA^{\textup{N}}\iff \dot\Psi=0\,.
\end{equation}
Notice that $\B^{\textup{D}}$ is the only magnetic Robin BC not involving the magnetic potential $\A$. For a general function $\alpha(s)$, Robin BCs clearly represent a (pointwise) interpolation between Dirichlet and Neumann BCs.

\subsection{Magnetic chiral boundary conditions}
\begin{Definition} Let $\alpha\colon \dO\to\R$ be a smooth, real-valued function on the boundary, and let $\beta\in\mathbb{R}$ be a real constant. Denote the unit tangent vector on $\partial\Omega$ by $\tv=\tv(s)$. The \emph{magnetic chiral boundary conditions} $\BA^{\textup{C}}(\alpha,\beta)$ are defined via
\begin{equation}
	T_{1,\A}=\alpha I+\iu\beta\left(\frac{\mathrm{d}}{\mathrm{d}s}+\iu\frac{e}{\hbar} \tv\cdot\A\right)\qquad\text{and}\qquad T_{2,\A}=I\,,
\end{equation}
\ie
\begin{equation}\label{eq-chiral}
	\psi\in\ker\BA^{\textup{C}}(\alpha, \beta)\iff \n\cdot(\gradA{\psi})|_{\dO}(s)=\alpha(s)\psi|_{\dO}(s)+\iu\beta\tv\cdot(\gradA{\psi})|_{\dO}(s)\,.
\end{equation} 
\end{Definition}

These BCs generalize Robin BCs, which are obtained by setting $\beta=0$. For $\beta\neq0$, they relate the value of a function and of its normal derivative to its tangential derivative, which physically corresponds to the particle velocity along the boundary. Magnetic chiral BCs on a disk have been studied in~\cite{JJV95}. In particolar, for some values of $\beta$, the spectrum of $\HA$ with such BCs is shown to be doubly unbounded, \ie the system is unstable.
\begin{Proposition}\label{prop-chiral}
	The magnetic Hamiltonian $\HA$ with chiral BCs $\BA^{\textup{C}}(\alpha, \beta)$  is  Hermitian for every choice of gauge. 
\end{Proposition}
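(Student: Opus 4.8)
The plan is to verify the Hermiticity condition $\Lambda_{\A}(\psi,\phi)=0$ directly for all $\psi,\phi\in\ker\BA^{\textup{C}}(\alpha,\beta)$, following the same strategy as in the proof of Proposition~\ref{prop-Robin} but now keeping track of the additional tangential term. The first step is to observe that the tangential covariant derivative appearing in~\eqref{eq-chiral}, once restricted to the boundary, is nothing but a covariant derivative with respect to the arc length. Indeed, since $\tv=\tv(s)$ is the unit tangent and $s$ the arc-length coordinate, one has $(\tv\cdot\grad\psi)|_{\dO}=\mathrm{d}\Psi/\mathrm{d}s$, so that
\[
\tv\cdot(\gradA\psi)|_{\dO}=\frac{\mathrm{d}\Psi}{\mathrm{d}s}+\iu\frac{e}{\hbar}(\tv\cdot\A)\,\Psi\equiv\mathcal{D}_s\Psi\,,
\]
and the chiral constraint~\eqref{eq-chiral} reads simply $\dot\Psi=\alpha\Psi+\iu\beta\,\mathcal{D}_s\Psi$.

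Next I would insert this expression for $\dot\Psi$ and $\dot\Phi$ into the boundary form $\Lambda_{\A}(\psi,\phi)=-\tfrac{\hbar^2}{2m}\int_{\dO}[\Psi^*\dot\Phi-\dot\Psi^*\Phi]\,\mathrm{d}s$. Because $\alpha$ is real-valued, the contributions proportional to $\alpha$ cancel exactly, precisely as in the Robin case; using $(\iu\beta)^*=-\iu\beta$ for real $\beta$, what remains is the purely chiral part
\[
\Lambda_{\A}(\psi,\phi)=-\frac{\hbar^2}{2m}\,\iu\beta\int_{\dO}\left[\Psi^*\mathcal{D}_s\Phi+(\mathcal{D}_s\Psi)^*\Phi\right]\mathrm{d}s\,.
\]

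The key step, which I expect to be the heart of the argument, is that the connection term $\iu\tfrac{e}{\hbar}(\tv\cdot\A)$ enters $\mathcal{D}_s\Phi$ and $(\mathcal{D}_s\Psi)^*$ with opposite signs, hence cancels in the combination above, leaving exactly the total derivative
\[
\Psi^*\mathcal{D}_s\Phi+(\mathcal{D}_s\Psi)^*\Phi=\Psi^*\frac{\mathrm{d}\Phi}{\mathrm{d}s}+\frac{\mathrm{d}\Psi^*}{\mathrm{d}s}\Phi=\frac{\mathrm{d}}{\mathrm{d}s}\!\left(\Psi^*\Phi\right)\,.
\]
This cancellation is exactly what makes the statement hold \emph{for every choice of gauge}, since the gauge-dependent part of $\mathcal{D}_s$ disappears identically. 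Finally, $\dO$ is a finite disjoint union of smooth closed curves, and $\Psi^*\Phi$ is smooth and periodic along each of them; therefore the integral of this total derivative vanishes on every connected component, yielding $\Lambda_{\A}(\psi,\phi)=0$ and thus the claim. The only point requiring mild care is the possible disconnectedness of $\dO$ when the billiard has holes, but the argument then applies component by component with no modification.
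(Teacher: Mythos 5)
Your proposal is correct and follows essentially the same route as the paper's proof: substitute the chiral constraint into the boundary form, cancel the real $\alpha$ terms, observe that the gauge-dependent part of the tangential covariant derivative cancels between the two summands leaving the total derivative $\tfrac{\mathrm{d}}{\mathrm{d}s}(\Psi^*\Phi)$, and integrate this over the closed boundary to get zero. The only cosmetic differences are your explicit introduction of the tangential covariant derivative $\mathcal{D}_s$ and your remark on disconnected boundaries, where the paper instead invokes the Gauss--Green theorem with $\partial(\dO)=\emptyset$.
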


As for the Robin case, this property is remarkably true independently of the geometry of the system.
\begin{proof}
	Let us suppose that $\psi, \phi\in\ker\BA^{\textup{C}}(\alpha, \beta)$ so that both satisfy~\eqref{eq-chiral}. Then
	\begin{align}
	\Lambda_{\A}(\psi,\phi)&=-\frac{\hbar^2}{2m} \int_{\dO}\left[\psi\cc\n\cdot\gradA\phi -\n\cdot(\gradA\psi)\cc\phi\right]\,\mathrm{d}s\notag\\
	&=-\frac{\hbar^2}{2m}\int_{\dO}\left\{\psi\cc[\alpha\phi+\iu\beta\tv\cdot\gradA\phi] - [\alpha\psi\cc-\iu\beta\tv\cdot(\gradA \psi)\cc]\phi\right\}\,\mathrm{d}s\notag\\
	&=-\frac{\hbar^2}{2m}\int_{\dO}\left\{\psi\cc\left[\iu \beta\frac{\partial\phi}{\partial s}-\beta\frac{e}{\hbar}\tv\cdot\A \phi\right] + \left[\iu\beta\frac{\partial\psi\cc}{\partial s}+\beta\frac{e}{\hbar}\tv\cdot\A \psi\cc\right]\phi\right\}\,\mathrm{d}s=-\frac{\hbar^2}{2m}\iu\beta\int_{\dO} \frac{\partial}{\partial s}(\psi\cc\phi)\,\mathrm{d}s\,.
\end{align}
Because of our regularity assumptions and the property $\partial(\dO)=\emptyset$, applying the Gauss-Green theorem we finally have
	\begin{equation}
	\Lambda_{\A}(\psi, \phi)=-\frac{\hbar^2}{2m}\iu\beta\int_{\partial(\dO)} \psi\cc\phi\,\mathrm{d}\xi=0\,,
	\end{equation}
	henceforth the claim. \qed
\end{proof}

\subsection{Non-local boundary conditions}
Robin and chiral BCs have a local nature, since they relate the value of the wave function at each point of the boundary to its derivatives at the same point. However, admissible BCs can also be non-local, and connect  the values of the wave function and of its derivatives at a boundary point $s$ to their values at  different boundary points $s'\neq s$. In this case the global geometry of $\Omega$ plays a crucial role, as non-local BCs as a matter of fact change the topology of the system: many interesting examples are considered e.g.\ in~\cite{AIM05,ShWiXi12}. A precise formulation of non-local BCs is however beyond the scope of this paper, and we refer the interested reader to the construction presented in~\cite{IbPP15}.
	
\section{Gauge transformations and magnetic boundary conditions}\label{sec-gauge}
We will  finally discuss the role of gauge transformations in the description of a quantum magnetic billiard. As remarked in~\autoref{sec-qmb}, different vector potentials linked by a gauge transformation, thus corresponding to the same magnetic field, yield different Hamiltonians which, in principle, may correspond to different physical dynamics. Understanding the role of magnetic BCs in this framework is the main contribution of our work. Before focusing on the BCs in \autoref{sec-gaugebc}, however, let us give a definition of a gauge transformation that can accommodate also the case of a non-simply-connected billiard.

\subsection{Gauge transformations}
\begin{Definition}
	Two vector potentials $\A=(A_x, A_y)$ and $\Ap=(\App_x,\App_y)$ are connected by a \emph{gauge transformation} on a set $\Omega$ if 	\begin{equation}\label{eq-Acurl}
		\frac{\partial A_x}{\partial x}-\frac{\partial A_y}{\partial y} =\frac{\partial \App_x}{\partial x}-\frac{\partial\App_y}{\partial y}
	\end{equation}
	for each $\x\in\Omega$, \emph{and}
	\begin{equation}\label{eq-Aflux}
		\frac{e}{\hbar}\oint_{\gamma}\bigl[\A(s)-\Ap(s)\bigr]\cdot\mathrm{d}\vb{s}\in 2\pg\,\Z 
	\end{equation}
	for each closed path $\gamma$ contained in $\Omega$. Moreover, an $\A$-dependent quantity which transforms unitarily (resp. trivially) under a gauge transformation is said to be \emph{gauge covariant} (resp. \emph{gauge invariant}).
\end{Definition}

From a physical point of view, the condition in Eq.~\eqref{eq-Acurl} says that the magnetic fields respectively associated with $\A$ and $\Ap$ have to coincide \emph{inside} the billiard. These magnetic fields, which in principle  may differ \emph{outside} the billiard,  are however further related by Eq.~\eqref{eq-Aflux}, a topological condition which expresses a flux quantization. Eq.~\eqref{eq-Aflux} says indeed that the magnetic fluxes (through the same surface) associated with $\A$ and $\Ap$, respectively, can differ at most by an integer multiple of the magnetic flux quantum $2\pg \hbar/e=h/e$. In other words, if we consider a billiard with a hole and we look exclusively at what happens inside the billiard, we are free to add an arbitrary number of magnetic flux quanta through the hole, as we cannot distinguish in any way the two systems (see Proposition~\ref{prop-unitaryequivalence}).

Notably, the condition supplied by Eq.~\eqref{eq-Aflux} is not necessary (i.e.\ it is trivial) in the particular case where $\Omega$ is simply connected, since in this case Eq.~\eqref{eq-Acurl} implies that
\begin{equation}
		\oint_{\gamma}\bigl[\A(s)-\Ap(s)\bigr]\cdot\mathrm{d}\vb{s}=0\,.
\end{equation}
This is a consequence of a standard topological result, namely the fact that each closed one-form defined in a simply connected manifold is also exact. 

From the mathematical point of view, the conditions~\eqref{eq-Acurl}--\eqref{eq-Aflux} further imply that the quantity $\A-\Ap$ can be unambiguously characterized in terms of a suitable ``gauge function''. This is expressed by the following proposition, whose proof is based e.g.\ on Proposition~2.1.3 of~\cite{FoHe10}.
\begin{Proposition}\label{prop-gauge}
	Let $\A$, $\Ap$ be two vector potentials connected by a gauge transformation on a set $\Omega$. Then there exists a \emph{gauge function} $\chi$, in general multivalued, such that for each $\x\in \Omega$:
	\begin{enumerate}
	\item 	 both the quantities $\grad\chi$ and $\e^{-\iu e\chi/\hbar}$ are well-defined (i.e.\ ordinary) functions;
	\item we have $\Ap=\A-\grad\chi$.
	\end{enumerate}
\end{Proposition}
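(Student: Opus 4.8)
The plan is to obtain $\chi$ by integrating the one-form $\A-\Ap$. First I would observe that condition~\eqref{eq-Acurl}, which asserts that $\A$ and $\Ap$ carry the same magnetic field throughout $\Omega$, is equivalent to the statement that $\A-\Ap$ is a \emph{closed} one-form on $\Omega$ (its two-dimensional curl vanishes identically). Fixing a base point $\x_0\in\Omega$, I would then set, for every $\x\in\Omega$,
\begin{equation}
\chi(\x)\equiv\int_{\x_0}^{\x}\bigl[\A-\Ap\bigr]\cdot\mathrm{d}\vb{s}\,,
\end{equation}
the line integral being taken along some path in $\Omega$ joining $\x_0$ to $\x$. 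Since $\Omega$ is connected such a path always exists, and since $\A-\Ap$ is closed the value of the integral depends only on the homotopy class of the path; thus $\chi$ is determined modulo the \emph{periods} $\oint_{\gamma}[\A-\Ap]\cdot\mathrm{d}\vb{s}$ over closed loops $\gamma\subset\Omega$. As $\Omega$ need not be simply connected, these periods may be nonzero and $\chi$ is then genuinely multivalued.

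Next I would verify the two asserted properties. For property 2, I would note that on any simply connected neighbourhood $\chi$ is an honest primitive of the one-form, so that $\grad\chi=\A-\Ap$ there; passing from one determination of $\chi$ to another shifts it only by a locally constant period, which leaves the gradient untouched. Hence $\grad\chi$ is an ordinary, single-valued function and $\Ap=\A-\grad\chi$, as claimed. For the remaining half of property 1, I would track the continuation of $\chi$ around a closed loop $\gamma$: it increases precisely by the period $\oint_{\gamma}[\A-\Ap]\cdot\mathrm{d}\vb{s}$. By the flux-quantization condition~\eqref{eq-Aflux} this period belongs to $(2\pg\hbar/e)\,\Z$, so that $\e^{-\iu e\chi/\hbar}$ is multiplied by $\e^{-2\pg\iu n}=1$ for some $n\in\Z$; therefore $\e^{-\iu e\chi/\hbar}$ is single-valued as well.

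The delicate point --- and the reason one invokes Proposition~2.1.3 of~\cite{FoHe10} --- is making this multivaluedness rigorous. The clean route is to pass to the universal cover $\widetilde{\Omega}$ of $\Omega$: the pullback of the closed form $\A-\Ap$ is exact there, since $\widetilde{\Omega}$ is simply connected, so the integral above defines a bona fide single-valued function on $\widetilde{\Omega}$. One then examines how this function behaves under the deck transformations, i.e.\ under the action of $\pi_1(\Omega)$, each element shifting $\chi$ by the corresponding period. Condition~\eqref{eq-Aflux} says exactly that these shifts lie in $(2\pg\hbar/e)\,\Z$, which is precisely what is needed for $\e^{-\iu e\chi/\hbar}$ to be invariant under all deck transformations and hence to descend to a single-valued function on $\Omega$, whereas $\grad\chi$ descends automatically because the periods are constants. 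In cohomological terms, $\A-\Ap$ represents a class in $H^1_{\mathrm{dR}}(\Omega)$ whose periods over a basis of $H_1(\Omega;\Z)$ are quantized by~\eqref{eq-Aflux}; this is what trivializes the exponential of $\chi$ even when $\chi$ itself is nontrivial.
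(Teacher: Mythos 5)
Your proposal is correct and follows essentially the same route as the paper: define $\chi$ as a line integral of $\A-\Ap$ from a fixed base point, observe that different paths change $\chi$ only by periods which condition~\eqref{eq-Aflux} forces into $(2\pg\hbar/e)\,\Z$, whence $\grad\chi$ and $\e^{-\iu e\chi/\hbar}$ are single-valued and $\Ap=\A-\grad\chi$. The universal-cover/deck-transformation discussion is a welcome formalization of the same idea, not a different argument.
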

\begin{proof}
Let us fix a reference point $\x_0\in \Omega$. The quantity
\begin{equation}
\chi_\gamma(\x)\equiv \int_{\gamma(\x_0, \x)} \bigl[\A(s)-\Ap(s)\bigr]\cdot\mathrm{d}\vb{s}\,,
\end{equation}
where $\gamma(\x_0, \x)$ denotes a path in $\Omega$ joining $\x_0$ to $\x$, is in general a multivalued function that depends on the path $\gamma$. However, since $\A$ and $\Ap$ are connected by a gauge transformation, the difference $\chi_{\gamma}(\x)-\chi_{\gamma'}(\x)$ is an integer multiple of the magnetic flux quantum: 
\begin{align}
\chi_{\gamma}(\x)-\chi_{\gamma'}(\x)&=
\int_{\gamma(\x_0, \x)} \bigl[\A(s)-\Ap(s)\bigr]\cdot\mathrm{d}\vb{s}-
\int_{\gamma'(\x_0, \x)} \bigl[\A(s)-\Ap(s)\bigr]\cdot\mathrm{d}\vb{s}=\oint_{\gamma''}\bigl[\A(s)-\Ap(s)\bigr]\cdot\mathrm{d}\vb{s}
=\frac{2n\pg\hbar}{e}\,,
\end{align}
for some $n\in\Z$, where the closed path $\gamma''$ is given by $\gamma(\x_0, \x)\cup \gamma'(\x, \x_0)$. Therefore, for each pair of paths $\gamma$ and $\gamma'$ in $\Omega$, we get that
\begin{equation}
\grad\chi_\gamma(\x)=\grad\chi_{\gamma'}(\x)\qquad\text{and}\qquad \e^{-\iu e\chi_\gamma(\x)/\hbar}=\e^{-\iu e\chi_{\gamma'}(\x)/\hbar}\,,
\end{equation}
that is, the quantities $\grad\chi_\gamma(\x)$ and $\e^{-\iu e\chi_\gamma(\x)/\hbar}$ do not actually depend on the path, and are thus ordinary functions. The relation  $\Ap(\x)=\A(\x)-\grad\chi(\x)$ follows then immediately for each $\x\in\Omega$. \qed
\end{proof}

With this property, we are now able to describe a gauge transformation at the level of the Hilbert space, and to consequently evaluate its effects on the observables and on the states of the physical system. In particular, we can now prove the gauge covariance of the magnetic Laplacian $\HA$.
\begin{Proposition}\label{prop-unitaryequivalence}
	Let $\HA$ and $\HAp$ be the operators
	\begin{equation}
		\HA=-\frac{\hbar^2}{2m} \gradA^2 \quad\text{and}\quad \HAp=-\frac{\hbar^2}{2m}\gradAp^2\,,
	\end{equation}
	defined respectively on the domains $\mathfrak{D}(\HA)$ and  $\mathfrak{D}(\HAp)$. If the vector potentials $\A$ and $\Ap$ are connected by a gauge transformation, with gauge function $\chi$, and if 
	\begin{equation}\label{eq-DHAp}
		\mathfrak{D}(\HAp)=U_\chi \mathfrak{D}(\HA)= \left\{U_\chi\psi : \psi \in \mathfrak{D}(\HA)\,\right\}\,,
	\end{equation}
	with $U_\chi\equiv \e^{-\iu e\chi/\hbar}$, then $\HA$ and $\HAp$ are gauge covariant, \ie unitarily equivalent:
	\begin{equation}
	\HAp=U_\chi\HA U_\chi^\dagger\,.
	\end{equation}
	In particular, if $\HA$ is essentially self-adjoint, then $\HAp$ is also essentially self-adjoint.
\end{Proposition}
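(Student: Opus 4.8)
The plan is to reduce the claimed operator identity $\HAp=U_\chi\HA U_\chi^\dagger$ to a single first-order intertwining relation between the two covariant derivatives, treating the algebra and the domains separately. First I would record the two facts furnished by Proposition~\ref{prop-gauge}: the multiplier $U_\chi$ is a genuine single-valued smooth function on $\Omega$ of constant modulus one (even though $\chi$ itself may be multivalued), so that $\psi\mapsto U_\chi\psi$ defines a unitary operator on $L^2(\Omega)$ with $U_\chi^\dagger=U_\chi^{-1}$; and the two vector potentials differ by an honest gradient, $\Ap=\A-\grad\chi$. These are exactly the ingredients that make the formal manipulation rigorous despite the possible multivaluedness of $\chi$.

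The heart of the argument is the first-order identity
\begin{equation}
\gradAp\,(U_\chi\psi)=U_\chi\,(\gradA\psi),\qquad \psi\in\testO.
\end{equation}
I would establish it by a one-line application of the Leibniz rule: the derivative of the phase $U_\chi$ produces precisely the gradient term needed to convert $\gradAp$ into $\gradA$, the two vector potentials differing exactly by $\grad\chi$. Iterating the identity once yields $\gradAp^2(U_\chi\psi)=U_\chi\gradA^2\psi$, and hence, as differential expressions,
\begin{equation}
\HAp\,U_\chi=U_\chi\,\HA \qquad\text{on }\testO.
\end{equation}

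At this stage the relation is only formal, between differential operators; to promote it to an equality of the Hilbert-space operators with their prescribed domains I would invoke the hypothesis $\mathfrak{D}(\HAp)=U_\chi\,\mathfrak{D}(\HA)$. This is exactly what guarantees that $U_\chi$ carries $\mathfrak{D}(\HA)$ bijectively onto $\mathfrak{D}(\HAp)$, so that $U_\chi\HA U_\chi^\dagger$ and $\HAp$ share the same domain and agree on it; combined with the intertwining identity and $U_\chi^\dagger=U_\chi^{-1}$ this gives the desired unitary equivalence $\HAp=U_\chi\HA U_\chi^\dagger$.

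Finally, for the essential self-adjointness I would invoke the standard principle that self-adjointness is a unitary invariant: conjugation by the unitary $U_\chi$ commutes with both the operator closure and the adjoint, so that $\closure{\HAp}=U_\chi\,\closure{\HA}\,U_\chi^\dagger$ and $(\HAp)^\dagger=U_\chi\,(\HA)^\dagger\,U_\chi^\dagger$; hence a unique self-adjoint extension of $\HA$ is transported to a unique self-adjoint extension of $\HAp$. I expect the only genuinely delicate point to be the domain bookkeeping of the preceding paragraph — ensuring that the purely algebraic intertwining of differential expressions really lifts to an equality of \emph{unbounded} operators on the correct domains — rather than the intertwining computation itself, which becomes routine once the single-valuedness of $U_\chi$ from Proposition~\ref{prop-gauge} is in hand.
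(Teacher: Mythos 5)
Your proof follows essentially the same route as the paper's: both rest on the first-order gauge covariance $\gradAp = U_\chi \gradA U_\chi^\dagger$ of the covariant derivative (a one-line Leibniz computation using $\Ap=\A-\grad\chi$ and the single-valuedness of $U_\chi$ from Proposition~\ref{prop-gauge}), which is then squared to intertwine the two magnetic Laplacians. You additionally make explicit the domain bookkeeping via hypothesis~\eqref{eq-DHAp} and the unitary invariance of (essential) self-adjointness, points the paper leaves implicit; this is a welcome elaboration rather than a different argument.
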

\begin{proof}
	The multiplication operator $U_\chi\colon L^2(\Omega)\to L^2(\Omega)$ defined as $U_\chi(\x)=\e^{-\iu e\chi(\x)/\hbar}$ is well-defined, as a consequence of Proposition~\ref{prop-gauge}, and obviously unitary. The unitary equivalence of $\HA$ and $\HAp$ follows then immediately from
		\begin{equation}\gradAp^2=\gradAp\cdot\gradAp=(U_\chi\gradA U_\chi^\dagger)\cdot(U_\chi\gradA U_\chi^\dagger)=U_\chi \gradA\cdot\gradA U_\chi^\dagger=U_\chi \gradA^2 U_\chi^\dagger\,,
		\end{equation}	
		that is, from the gauge covariance of $\gradA$.
	\end{proof}
	
If the vector potentials are smooth, the condition~\eqref{eq-DHAp} is trivially satisfied when $\HA$ and $\HAp$ are both defined on $\testO$. In particular this is the relevant result when the particle is free to move in the whole plane $\Omega=\R^2$, as we discussed in \autoref{sec-sa}. 

Moreover, if $\Omega$ is not simply connected, $\HA$ and $\HAp$ may \emph{not} be unitarily equivalent, having hence different spectral properties, even if $\A$ and $\Ap$ describe the same magnetic field inside the billiard. This phenomenon is particularly relevant for the description (and the interpretation) of the Aharonov-Bohm effect, see \eg ~\cite{WuYa75, BawBur83, MoRa}.

\subsection{Covariance of magnetic boundary conditions}\label{sec-gaugebc}
For a regular magnetic billiard, the discussion is complicated by the fact that the typical domain of the Hamiltonian involves a BC, namely $\mathfrak{D}(\HA)=\ker\BA$: in general, even if $\HAp$ is defined on $\mathfrak{D}(\HAp)=\ker\BAp$ and $\A$ is connected by a gauge transformation to $\Ap$, $\HA$ and $\HAp$ may not be unitarily equivalent. As previously anticipated, in order to ensure the unitary equivalence of $\HA$ and $\HAp$, admissible magnetic BCs should be gauge covariant, in the sense that
	\begin{equation}
	\ker\BAp=U_\chi\ker\BA\,,
	\end{equation}
	so that~\eqref{eq-DHAp} is satisfied. In the next proposition we give a condition on the boundary operators $T_{i,\A}$ introduced in Eq.~\eqref{eq-Top} which makes the corresponding BC covariant.
	\begin{Proposition}
		Let $\BA$, $\BAp$ be two magnetic boundary conditions associated with two vector potentials $\A$ and $\Ap$ connected by a gauge transformation, with gauge function $\chi$. If
		\begin{equation}
		T_{i,\Ap}=u_{\chi} T_{i,\A}u_{\chi}^\dagger\qquad(i=1,2)\,,
		\end{equation}
		where $u_{\chi}\equiv \gamma U_{\chi}$, then the boundary conditions are gauge covariant: 
		\begin{equation}
		\ker \BAp=U_\chi \ker\BA\,.
		\end{equation}
	\end{Proposition}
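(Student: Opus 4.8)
The plan is to show that the covariance hypothesis on the boundary operators propagates to the full bulk-to-boundary operator, in the form of the single intertwining identity $\BAp\,U_\chi=u_\chi\,\BA$ on $\testO$; once this is established, the conclusion $\ker\BAp=U_\chi\ker\BA$ is immediate because $u_\chi$ is invertible and $U_\chi$ is a bijection of $\testO$. Everything therefore reduces to understanding how the two ingredients of $\BA$ in Eq.~\eqref{eq-BBop}, the restriction $\gamma$ and the covariant normal derivative $\nuA$, transform under the gauge unitary $U_\chi=\e^{-\iu e\chi/\hbar}$.

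The two relations I would establish first are
\begin{equation}
\gamma\,U_\chi=u_\chi\,\gamma\qquad\text{and}\qquad\nuAp\,U_\chi=u_\chi\,\nuA\,.
\end{equation}
The first follows because restriction distributes over products: for $\psi\in\testO$ one has $\gamma(U_\chi\psi)=(U_\chi|_{\dO})(\psi|_{\dO})=u_\chi\gamma\psi$, where $u_\chi=\gamma U_\chi$ is exactly multiplication by the boundary value of the genuine smooth function $\e^{-\iu e\chi/\hbar}$, well-defined by Proposition~\ref{prop-gauge}. The second is the gauge covariance of the covariant normal derivative and is the technical heart of the argument: by Proposition~\ref{prop-unitaryequivalence} the covariant derivative satisfies $\gradAp=U_\chi\gradA U_\chi^\dagger$, hence $\gradAp(U_\chi\psi)=U_\chi(\gradA\psi)$; since $U_\chi$ is a scalar multiplication it commutes with contraction against the normal field $\n$, so taking the normal component and restricting to $\dO$ gives $\n\cdot(\gradAp U_\chi\psi)|_{\dO}=u_\chi\,\n\cdot(\gradA\psi)|_{\dO}$, i.e.\ $\nuAp U_\chi=u_\chi\nuA$.

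With these in hand I would substitute the covariance hypothesis $T_{i,\Ap}=u_\chi T_{i,\A}u_\chi^\dagger$ into $\BAp=T_{1,\Ap}\gamma-T_{2,\Ap}\nuAp$. Applying $\BAp$ to $U_\chi\psi$ and using the two intertwining relations, each term produces a factor $u_\chi^\dagger u_\chi=I$, which cancels because $u_\chi$ is multiplication by the unit-modulus boundary function $\e^{-\iu e\chi/\hbar}|_{\dO}$ and is therefore boundary-unitary; what remains is
\begin{equation}
\BAp\,U_\chi=u_\chi\bigl(T_{1,\A}\gamma-T_{2,\A}\nuA\bigr)=u_\chi\,\BA\,.
\end{equation}
Since $u_\chi$ is invertible and $U_\chi$ is a bijection of $\testO$ with inverse $U_\chi^\dagger$ (multiplication by the nonvanishing smooth function $\e^{-\iu e\chi/\hbar}$), this identity yields $\BAp(U_\chi\psi)=0\iff\BA\psi=0$, that is $U_\chi\psi\in\ker\BAp\iff\psi\in\ker\BA$, which is precisely $\ker\BAp=U_\chi\ker\BA$.

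I expect the only genuinely delicate point to be the covariance of $\nuA$: one must justify sliding $U_\chi$ past the geometric operations ``take the $\n$-component of the covariant gradient'' and ``restrict to $\dO$''. This is legitimate precisely because $U_\chi$ is a single-valued smooth scalar multiplication operator — guaranteed by Proposition~\ref{prop-gauge} even when the gauge function $\chi$ is itself multivalued — so it factors cleanly through both operations and leaves behind exactly the boundary unitary $u_\chi$.
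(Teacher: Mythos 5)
Your proposal is correct and follows essentially the same route as the paper: both establish the intertwining relations $\gamma U_\chi=u_\chi\gamma$ and $\nuAp U_\chi=u_\chi\nuA$, substitute the covariance hypothesis on the $T_{i,\A}$ to obtain $\BAp U_\chi=u_\chi\BA$, and conclude from the invertibility of $u_\chi$ and $U_\chi$. You simply spell out two steps the paper leaves implicit --- the derivation of the covariance of $\nuA$ from $\gradAp=U_\chi\gradA U_\chi^\dagger$, and the passage from the intertwining identity to the equality of kernels --- which is a reasonable amount of added detail.
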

	\begin{proof}
	An immediate calculation shows that the covariant normal derivative $\nuA$ is actually gauge covariant, in $\testO$: 
		\begin{equation}
		\nu_{\Ap}U_\chi\psi=u_\chi\nuA\psi\,.
		\label{eq:nuA-gauge}
		\end{equation}
		Using this result and the gauge covariance of the $T_{i,\A}$ we get
		\begin{align}
		\BAp U_\chi\psi&=[T_{1,\Ap}\gamma - T_{2,\Ap} \nuAp]U_{\chi}\psi
		=[T_{1,\Ap}\gamma U_{\chi} - T_{2,\Ap} \nuAp U_{\chi}]\psi\notag\\
		&=[T_{1,\Ap}u_{\chi}\gamma  - T_{2,\Ap} u_{\chi}\nuA]\psi
		=u_{\chi} [T_{1,\A} \gamma - T_{2,\A}\nuA]\psi
		=u_{\chi}\BA \psi\,,
		\end{align}
		thus implying the claim. \qed
	\end{proof}
	
	We conclude this section by proving the gauge invariance of the boundary form $\Lambda_{\A}(\psi,\phi)$ for an arbitrary regular magnetic billiard.
	\begin{Proposition}\label{prop-Lambdacovariance}
	Let $\A$ and $\Ap$ be two vector potentials connected by a gauge transformation, with gauge function $\chi$. Then we have
		\begin{equation}
		\forall\,\psi, \phi\in \testO:\qquad
		\Lambda_{\A}(\psi,\phi)=\Lambda_{\Ap}(U_{\chi}\psi,U_{\chi}\phi)\,.
		\end{equation}
	\end{Proposition}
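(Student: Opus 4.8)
The plan is to reduce the statement to the gauge covariance of the magnetic Laplacian already established in Proposition~\ref{prop-unitaryequivalence}, so that the invariance of the boundary form becomes a one-line computation in $L^2(\Omega)$. The only ingredients I need are the unitarity of $U_\chi$ and the operator identity $\HAp U_\chi\psi=U_\chi\HA\psi$, valid for every $\psi\in\testO$.

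First I would record this operator identity at the level of $\testO$. It is precisely the gauge covariance of $\gradA$ exploited in the proof of Proposition~\ref{prop-unitaryequivalence}: from $\gradAp U_\chi=U_\chi\gradA$ on $\testO$ one gets $\gradAp^2 U_\chi=U_\chi\gradA^2$, hence $\HAp U_\chi=U_\chi\HA$ as differential operators on $\testO$ (equivalently $\HAp=U_\chi\HA U_\chi^\dagger$ together with $U_\chi^\dagger U_\chi=I$). I would also note that $U_\chi=\e^{-\iu e\chi/\hbar}$ maps $\testO$ onto itself: since $\grad\chi=\A-\Ap$ is smooth up to the boundary (both potentials being smooth on $\Omegabar$), $U_\chi$ is a smooth, nowhere-vanishing multiplier, so $U_\chi\psi,U_\chi\phi\in\testO$ and $\Lambda_{\Ap}(U_\chi\psi,U_\chi\phi)$ is well defined.

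The main computation is then immediate: inserting $\HAp U_\chi=U_\chi\HA$ into the definition~\eqref{eq-LambdaAdef} of the boundary form and using $\braket{U_\chi a|U_\chi b}=\braket{a|b}$, I would write
\begin{align}
\Lambda_{\Ap}(U_\chi\psi,U_\chi\phi)
&=\braket{U_\chi\psi|\HAp U_\chi\phi}-\braket{\HAp U_\chi\psi|U_\chi\phi}\notag\\
&=\braket{U_\chi\psi|U_\chi\HA\phi}-\braket{U_\chi\HA\psi|U_\chi\phi}\notag\\
&=\braket{\psi|\HA\phi}-\braket{\HA\psi|\phi}=\Lambda_{\A}(\psi,\phi)\,,
\end{align}
which is the claim.

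As an independent cross-check, and to display the purely boundary character of the result, I would re-derive it from the current representation of Proposition~\ref{prop-lackofsymmetry}. Writing $\Lambda_{\A}(\psi,\phi)=-\tfrac{\hbar^2}{2m}\int_{\dO}[(\gamma\psi)\cc\,\nuA\phi-(\nuA\psi)\cc\,\gamma\phi]\,\mathrm{d}s$ and using $\gamma U_\chi=u_\chi\gamma$ together with the covariance relation~\eqref{eq:nuA-gauge}, each term in the integrand acquires a factor $u_\chi\cc u_\chi=\abs{u_\chi}^2$; but $u_\chi$ is multiplication by the unimodular boundary function $\e^{-\iu e\chi/\hbar}|_{\dO}$, so this factor equals $1$ pointwise and the integrand is unchanged, giving the equality even before integrating. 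I regard the only genuine obstacle as checking that $\HAp U_\chi=U_\chi\HA$ and $U_\chi\testO=\testO$ really hold up to $\dO$---that is, that $\chi$ is regular on the boundary---which is exactly where the smoothness of $\A$ and $\Ap$ on $\Omegabar$ enters and guarantees that no boundary terms are silently discarded.
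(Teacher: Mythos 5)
Your proposal is correct, but your primary argument takes a genuinely different route from the paper. The paper never touches the $L^2$ inner-product definition~\eqref{eq-LambdaAdef} in this proof: it shows instead that the current density itself is pointwise gauge invariant, $\vb{j}_{U_\chi\psi,U_\chi\phi}^{\Ap}=\vb{j}_{\psi,\phi}^{\A}$, using $\gradAp U_\chi=U_\chi\gradA$, and then invokes the boundary-integral representation of Proposition~\ref{prop-lackofsymmetry}. You work directly from the definition of $\Lambda_{\A}$ as $\braket{\psi|\HA\phi}-\braket{\HA\psi|\phi}$, using the intertwining relation $\HAp U_\chi=U_\chi\HA$ on $\testO$ together with the unitarity of $U_\chi$. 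Your route is shorter and bypasses Green's formula entirely, which is a genuine simplification; the paper's route buys a strictly stronger statement, namely that the probability current --- not merely its flux through $\dO$ --- is gauge invariant, which is the physically meaningful local version of the result. Your ``cross-check'' via $\gamma U_\chi=u_\chi\gamma$ and Eq.~\eqref{eq:nuA-gauge} is essentially the paper's argument rewritten in the $(\gamma,\nuA)$ notation, so you have in effect supplied both proofs. Your attention to the regularity of $\chi$ up to $\dO$ (so that $U_\chi$ preserves $\testO$ and the intertwining holds on the closure) is a point the paper leaves implicit, and it is correctly resolved by Proposition~\ref{prop-gauge} together with the smoothness of $\A$ and $\Ap$ on $\Omegabar$.
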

	\begin{proof}
		It follows from a direct calculation:
		\begin{align}
		\vb{j}_{U_\chi\psi,U_\chi\phi}^{\Ap}&=-\frac{\hbar^2}{2m}\left\{
		 (U_\chi\psi)\cc \gradAp U_\chi\phi -U_\chi\phi(\gradAp U_\chi\psi)\cc	 \right\} =-\frac{\hbar^2}{2m}\left\{U_\chi^\dagger\psi\cc (U_\chi\gradA U_\chi^\dagger) U_\chi\phi - [(U_\chi\gradA U_\chi^\dagger)U_\chi \psi]\cc U_\chi\phi\right\}\notag\\
		&=-\frac{\hbar^2}{2m}\left\{\psi\cc \gradA \phi-(\gradA\psi)\cc \phi\right\}
		=\vb{j}_{\psi,\phi}^{\A}\,. 
		\end{align}\qed
	\end{proof}
	
	This proposition ensures  that if $\Lambda_{\A}(\psi,\phi)$ vanishes for each $\psi,\phi$ belonging to $\ker \BA$, so does $\Lambda_{\Ap}(U_\chi\psi,U_\chi\phi)$ for each $U_\chi\psi, U_\chi\phi$ belonging to $\ker \BAp$. As a matter of facts, this result explains on general grounds why the Hamiltonian $\HA$ with either magnetic Robin or chiral BCs is Hermitian \emph{for each  choice of the gauge}, as shown in Propositions~\ref{prop-Robin} and~\ref{prop-chiral}.

\section{Conclusions}
In this work we investigated the role of boundary conditions for the magnetic Laplacian in bounded regions of the plane, examining some physically interesting examples and remarking the importance of considering  gauge covariant boundary conditions. In particular, we obtained sufficient conditions which ensure the gauge covariance of the bulk-to-boundary operator. Similar results can be analogously obtained for more general setups, such as relativistic billiards described by (magnetic) Dirac operators. 

For what concerns the magnetic Laplacian, a natural application of our results is to  characterize the dependence of the spectral properties on the boundary conditions. In particular, following e.g.\ the analysis in~\cite{NT87, AANS98}, a strip-shaped magnetic billiard can indeed be thought as a model for the quantum Hall effect, in the absence of other interactions. Future work will be devoted to this topic~\cite{quanHall}.

\section*{Acknowledgments}
This work was partially supported by Istituto Nazionale di Fisica Nucleare (INFN) through the project ``QUANTUM'' and the Italian National Group of Mathematical Physics (GNFM-INdAM). 
We thank Giovanni Gramegna for many comments and useful discussions.

\appendix
\section{General theory of quantum boundary conditions}\label{app:generalboundary}
In this appendix we will give a brief sketch of the general theory of quantum BCs for a second-order differential strongly elliptic operator in a smooth, bounded subset $\Omega\subset\mathbb{R}^d$. In full generality, such an operator can be formally written as
\begin{equation}
K=-\sum_{i,j=1}^d\frac{\partial}{\partial x_i}a_{ij}(\x)\frac{\partial}{\partial x_j}+\sum_{i=1}^db_i(\x)\frac{\partial}{\partial x_i}+c(\x)\,,
\end{equation}
where all coefficients are assumed to be smooth up to the boundary $\partial\Omega$. Strong ellipticity means that, for all $\x\in\Omega$, given any $0\neq\xi\in\mathbb{R}^d$ we have
\begin{equation}
	\sum_{i,j=1}^da_{ij}(\x)\xi_i\xi_j>0\,,
\end{equation}
\ie the highest-order coefficients define a positive definite quadratic form. This condition is clearly satisfied by the Laplacian as well as the magnetic Laplacian, both having $a_{ij}(\x)\propto \delta_{ij}$. 

We will also require $K$ to be \emph{formally self-adjoint}, in the following sense: the boundary form $\Lambda_K(\psi,\phi)$, defined in the usual way as
\begin{equation}
	\Lambda_K(\psi,\phi)=\int_\Omega\left[\psi^*K\phi-(K\psi)^*\phi\right]\,\mathrm{d}\x\,,
\end{equation}
vanishes for every test (\ie smooth and compactly supported) functions $\phi, \psi \in \mathcal{D}(\Omega)$.

For any choice of the coefficients, $K$ can be implemented as an operator mapping $\testO$ on itself. Besides, a generalized version of Green's formula holds~\cite{Lions}: letting $\gamma\colon\testO\rightarrow\testdO$ be the restriction operator defined in Eq.~\eqref{eq:gamma}, there exists a boundary operator $\nu_K\colon\testO\rightarrow\testdO$, obtained as a coefficient-dependent linear combination of $\gamma$ and the normal derivative $\nu$, such that
\begin{equation}
\Lambda_K(\psi,\phi)=\int_{\partial\Omega}\left[\gamma \psi^*\,\nu_K\phi-(\nu_K\psi)^*\,\gamma\phi\right]\,\,\mathrm{d} s,
\end{equation}
$\mathrm{d}s$ being the induced measure on $\dO$. 

Now, as in the case of the (magnetic) Laplacian, $K$ is not closed \emph{nor} Hermitian as an operator with domain $\testO$. Its Hermiticity can be restored only by imposing proper BCs, \ie by constructing a bulk-to-boundary operator $\B=T_1\gamma-T_2\nu_K$ such that functions in $\ker\B$ are characterized by a vanishing boundary form. Some explicit examples of BCs restoring the Hermiticity have been found for the magnetic Laplacian. However, Hermiticity is \emph{not} a sufficient condition for an operator to have a well-defined spectrum and hence be associable to a physical observable: \emph{essential self-adjointness}, \ie self-adjointness of its closure, is required. The question is: which BCs render $K$ (essentially) self-adjoint?

A thorough discussion of the complete parametrization of admissible BCs, far from the scope of this paper, requires the introduction of \emph{Sobolev spaces} and the extension of the operators $\gamma$ and $\nu_K$ to two operators, known as Sobolev traces, mapping the maximal domain of $K$ (\ie the space of functions $\psi$ such that $K\psi\in L^2(\Omega)$) to proper negative-order Sobolev spaces. Here, without entering in such details, we will only briefly present the main results in the cases $d=1$ and $d>1$.

\subsection{Case \texorpdfstring{$d=1$}{d=1}}
When $d=1$, \ie $\Omega$ consists of finitely many segment lines (some of them possibly being infinite or semi-infinite), the boundary $\dO$ simply consists of a finite number $r$ of points ($r=2$ in the case of one segment), hence $\gamma$ and $\nu_K$ have values in $\mathbb{C}^r$ and, for every BC, $T_1$ and $T_2$ are $r\times r$ complex matrices.

In this setup, a powerful result can be obtained~\cite{AIM15}: 
\begin{Proposition}
	Admissible self-adjoint realizations of $K$ are \textit{bijectively} parametrized by all $r\times r$ unitary matrices in the following way: $U$ is associated with the boundary condition $\B_U$ characterized by
	\begin{equation}
		T_1=\iu(I+U)\,,\qquad T_2=I-U\,,
		\label{eq-tiunitary}
	\end{equation}
	hence with
	\begin{equation}
		\psi\in\ker \B_U\iff \iu(I+U)\gamma\psi=(I-U)\nu_K\psi\,.
		\label{eq-1dboundary}
	\end{equation}
\end{Proposition}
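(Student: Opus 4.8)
The plan is to reduce the claim to finite-dimensional linear algebra on the boundary space $\C^r$ and to recognize admissible boundary conditions as the Lagrangian subspaces of a Hermitian symplectic form, which a Cayley-type transform puts in bijection with unitary matrices. First I would record that, since $\dO$ is a finite set of $r$ points, the generalized Green's formula of the appendix collapses to a finite sum: absorbing the orientation signs at the endpoints into the definition of the outward boundary derivative $\nu_K$, one obtains $\Lambda_K(\psi,\phi) = \langle\gamma\psi, \nu_K\phi\rangle - \langle\nu_K\psi, \gamma\phi\rangle$ with the standard inner product on $\C^r$. I would also note that the combined trace $\psi\mapsto(\gamma\psi,\nu_K\psi)$ is surjective onto $\C^r\times\C^r$, so that the boundary data $(\Psi,\dot\Psi):=(\gamma\psi,\nu_K\psi)$ range freely over $\C^{2r}$; this is what allows me to argue purely on the boundary space.

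Next I would diagonalize the boundary form via the change of variables $a = \nu_K\psi + \iu\gamma\psi$ and $b = \nu_K\psi - \iu\gamma\psi$. A direct computation yields $\Lambda_K(\psi,\phi) = \tfrac{\iu}{2}\bigl(\langle a_\psi, a_\phi\rangle - \langle b_\psi, b_\phi\rangle\bigr)$, so the indefinite form splits into a ``$+$'' and a ``$-$'' part of equal rank. The crucial algebraic observation is that the boundary condition in Eq.~\eqref{eq-1dboundary} is equivalent to the single relation $b = Ua$: rearranging $\iu(I+U)\Psi = (I-U)\dot\Psi$ gives $\dot\Psi - \iu\Psi = U(\dot\Psi + \iu\Psi)$, which is exactly $b_\psi = U a_\psi$.

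With this in hand the two directions are quick. For Hermiticity (isotropy), on $\ker\B_U$ we have $\langle b_\psi, b_\phi\rangle = \langle Ua_\psi, Ua_\phi\rangle = \langle a_\psi, a_\phi\rangle$ because $U$ is unitary, so $\Lambda_K$ vanishes identically. For maximality (self-adjointness), if $\Lambda_K(\psi,\phi)=0$ for every $\psi\in\ker\B_U$, then substituting $b_\psi = Ua_\psi$ and using the surjectivity of $a_\psi$ over $\C^r$ forces $\langle a_\psi, a_\phi - U^\dagger b_\phi\rangle = 0$ for all $a_\psi$, whence $b_\phi = Ua_\phi$ and $\phi\in\ker\B_U$. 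Injectivity of the map $U\mapsto\ker\B_U$ then follows because $U$ is recovered from the graph $\{(a,Ua)\}$ as $a$ sweeps $\C^r$.

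The main obstacle is surjectivity: proving that \emph{every} self-adjoint realization of $K$ arises from some such $U$. This is where finite-dimensionality and the abstract theory must be invoked. I would argue that $(\C^r,\gamma,\nu_K)$ is a boundary triple for the symmetric operator $K$ restricted to $\mathcal{D}(\Omega)$, whose deficiency indices are $(r,r)$, so that its self-adjoint extensions are in bijection with self-adjoint relations on $\C^r$; the Cayley transform identifies these relations with unitaries $U$, and the computation above shows the corresponding domains are precisely the $\ker\B_U$. Equivalently, one checks that any maximal isotropic (Lagrangian) subspace of $(\C^{2r},\Lambda_K)$ is the graph $b = Ua$ of a unitary, by a dimension count against the $\pm$ eigenspaces of the diagonalized form. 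Verifying the boundary-triple axioms (in particular the surjectivity of the combined trace, already noted) together with the deficiency-index computation is the only genuinely technical input; everything else is the elementary linear algebra sketched above.
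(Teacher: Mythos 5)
The paper does not actually prove this proposition: it is quoted from the literature (the reference given is~\cite{AIM15}) and used as a black box, so there is no internal proof to compare against. Your argument is, in substance, the standard proof from that reference: rewrite the boundary form in the variables $a=\nu_K\psi+\iu\gamma\psi$, $b=\nu_K\psi-\iu\gamma\psi$, observe that $\iu(I+U)\gamma\psi=(I-U)\nu_K\psi$ is exactly the graph relation $b=Ua$, and identify admissible domains with maximal isotropic subspaces of the form $\tfrac{\iu}{2}(\langle a_\psi,a_\phi\rangle-\langle b_\psi,b_\phi\rangle)$, which a Cayley-type correspondence puts in bijection with unitaries. I checked the algebra: the diagonalization identity, the equivalence $\iu(I+U)\Psi=(I-U)\dot\Psi\iff b=Ua$, the isotropy from unitarity of $U$, and the maximality step via surjectivity of $a_\psi$ are all correct. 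Two caveats you should make explicit. First, the maximality computation must be run with $\phi$ in the \emph{maximal} domain of $K$ (the adjoint's domain), not merely in $\testO$; in $d=1$ on bounded intervals this is harmless because the maximal domain is $H^2$ of each segment, the traces extend boundedly there, and $\ker\B_U\cap\testO$ is a core, but self-adjointness is a statement about closures and your smooth-function argument only yields essential self-adjointness until this is said. Second, your claim that the deficiency indices equal $(r,r)$ is correct for finitely many \emph{bounded} segments (each endpoint is a regular endpoint, hence limit-circle), but the proposition's parenthetical about infinite or semi-infinite segments would require a limit-point hypothesis at infinity that neither the paper nor your sketch supplies; since the appendix assumes $\Omega$ bounded, this is a gap in the statement rather than in your proof. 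With those provisos, your proposal is a faithful reconstruction of the proof the paper delegates to its references, and the surjectivity step you flag as the technical core is indeed where all the real work (boundary triples or von Neumann theory) lives.
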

As a result, a bulk-to-boundary operator $\B$ corresponds to an admissible BC \emph{if and only if} its associated boundary matrices $T_i$ can be written as in Eq.~\eqref{eq-tiunitary}. As one may readily verify, admissible BCs for a segment include Robin conditions, whilst chiral conditions cannot obviously be defined in this setup.

Finally, notice that all BCs parametrized by a unitary matrix $U$ such that $I-U$ is invertible (\ie with $U$ not having $1$ as eigenvalue) can be equivalently parametrized as follows:
\begin{equation}
\nu_K\psi=L\gamma\psi\,,
\end{equation}
where $L=\iu(I+U)(I-U)^{-1}$, the Cayley transform  of $U$, is a Hermitian matrix.

\subsection{Case  \texorpdfstring{$d>1$}{d>1}}
The situation is more involved when $d>1$, with $\partial\Omega$ hence being a nontrivial $(d-1)$-dimensional manifold. In fact, a parametrization formally analogous to the one in Eq.~\eqref{eq-1dboundary} can be obtained, but with a huge complication: instead of $\nu_K$, a ``regularized'' operator $\tilde{\nu}_K$~\cite{Gr68,FGL18}, which heavily depends on the geometry of the system, appears. However, among others, two cases of remarkable physical interest may be proven to define admissible BCs:
\begin{Proposition}
A boundary condition of the form
\begin{equation}
	\nu_K\psi=L\gamma\psi
\end{equation}
is an admissible boundary condition in the following two cases:
\begin{itemize}
	\item $L$ is an \emph{elliptic (pseudo)differential} operator with degree at most $1$~\cite{grubbhassi} (Theorem 8.9);
	\item $L$ is a \emph{bounded} operator~\cite{behrndthassi} (Theorem 6.24 and Corollary 6.25).
\end{itemize}
\end{Proposition}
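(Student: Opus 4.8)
The plan is to realise the boundary condition as a self-adjoint restriction of the maximal operator attached to $K$, using the abstract machinery of (quasi-)boundary triples, and then to verify the two sufficient conditions on $L$ separately. First I would introduce the \emph{minimal} operator $K_{\min}$, the closure of $K$ on $\mathcal{D}(\Omega)$, and the \emph{maximal} operator $K_{\max}$, whose domain is the space of all $\psi\in L^2(\Omega)$ with $K\psi\in L^2(\Omega)$. Since $K$ is formally self-adjoint one has $K_{\min}^\dagger=K_{\max}$, so that every self-adjoint realisation of $K$ is a restriction of $K_{\max}$ extending $K_{\min}$; the admissible boundary conditions are therefore in one-to-one correspondence with the self-adjoint operators lying ``in the middle'' between these two.

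The second step is to promote Green's formula to the maximal domain. On $\testO$ the naive traces $\gamma$ and $\nu_K$ are everywhere defined, but on $\mathfrak{D}(K_{\max})$ they only survive as Sobolev traces with values in negative-order Sobolev spaces on $\partial\Omega$; this is exactly where the regularised conormal derivative $\tilde{\nu}_K$ enters. I would then record the generalised Green identity
\begin{equation}
\Lambda_K(\psi,\phi)=\langle\gamma\psi,\tilde{\nu}_K\phi\rangle_{\partial\Omega}-\langle\tilde{\nu}_K\psi,\gamma\phi\rangle_{\partial\Omega}
\end{equation}
valid for $\psi,\phi\in\mathfrak{D}(K_{\max})$, which equips the pair $(\gamma,\tilde{\nu}_K)$ with the structure of a (quasi-)boundary triple with boundary space $L^2(\partial\Omega)$. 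In this language a relation of the form $\tilde{\nu}_K\psi=L\gamma\psi$ determines a symmetric restriction precisely when $L$ is symmetric on $L^2(\partial\Omega)$, and the whole problem reduces to showing that symmetry of $L$ upgrades to self-adjointness of the operator realisation—a question about the regularity of the domain of the restriction.

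For the two cases I would then invoke the corresponding structural results. When $L$ is a self-adjoint elliptic pseudodifferential operator of degree at most $1$, the pair $(\gamma,\tilde{\nu}_K)$ together with the boundary relation satisfies a Lopatinski--Shapiro--type ellipticity condition, so Grubb's calculus of pseudodifferential boundary problems guarantees that $\ker(\nu_K-L\gamma)$ consists of functions with the expected Sobolev regularity and that the resulting operator is self-adjoint; this is the content of Theorem~8.9 of~\cite{grubbhassi}. When instead $L$ is merely bounded and self-adjoint on $L^2(\partial\Omega)$, ellipticity of the boundary symbol is lost, but such an $L$ still defines an admissible self-adjoint boundary parameter in the quasi-boundary-triple framework, and self-adjointness of the realisation follows from the bounded-perturbation argument of Theorem~6.24 and Corollary~6.25 of~\cite{behrndthassi}.

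The hard part is not the abstract bookkeeping but the reconciliation between the formal boundary condition $\nu_K\psi=L\gamma\psi$, written with the naive conormal derivative, and its rigorous counterpart phrased through $\tilde{\nu}_K$: on the maximal domain the two differ by a geometry-dependent, generally unbounded operator, so one must check that imposing the condition does not secretly shrink the domain below the self-adjoint realisation. In the elliptic case this is controlled by the regularity of solutions supplied by the symbol calculus; in the bounded case it is precisely the boundedness of $L$ that keeps the discrepancy under control, which is exactly why these two hypotheses—and no weaker ones—appear in the statement.
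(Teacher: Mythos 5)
Your proposal is correct and follows essentially the same route as the paper, which states this Proposition without proof and simply defers to Theorem~8.9 of~\cite{grubbhassi} and Theorem~6.24 and Corollary~6.25 of~\cite{behrndthassi}; your added scaffolding (minimal/maximal operators, Sobolev traces, the regularized conormal derivative $\tilde{\nu}_K$, and the implicit requirement that $L$ be symmetric) matches the framework the paper sketches in the surrounding discussion. No substantive divergence to report.
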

The first case obviously contains magnetic chiral (and, in particular, Robin) boundary conditions, considered in the main text.

\bibliography{bibfile}

\end{document}